\documentclass[conference]{IEEEtran}
\usepackage{amsthm}
\usepackage{quantum}
\allowdisplaybreaks

\begin{document}
\title{Efficient achievability for quantum protocols using decoupling theorems}

\author{
    \IEEEauthorblockN{Christoph Hirche\IEEEauthorrefmark{1} and Ciara Morgan\IEEEauthorrefmark{2}}
    \IEEEauthorblockA{Institut f\"{u}r Theoretische Physik\\ Leibniz Universit\"{a}t Hannover\\ Appelstra\ss e 2, D-30167 Hannover, Germany\\
Email: \IEEEauthorrefmark{1}christoph.hirche@itp.uni-hannover.de \IEEEauthorrefmark{2}ciara.morgan@itp.uni-hannover.de}
}

\maketitle

\begin{abstract}
Proving achievability of protocols in quantum Shannon theory usually does not consider 
the efficiency at which the goal of the protocol can be achieved. Nevertheless it is known
that protocols such as coherent state merging are efficiently achievable at optimal rate. 
We aim to investigate this fact further in a general one-shot setting,  
by considering certain classes of decoupling theorems 
and give exact rates for these classes. Moreover we compare results of general 
decoupling theorems using Haar distributed unitaries with those using smaller sets of operators, 
in particular $\epsilon$-approximate 2-designs. We also observe the behavior of
our rates in special cases such as $\epsilon$ approaching zero and the asymptotic limit. 
\end{abstract}

\section{Introduction}

Quantum Shannon theory is concerned with the question of how efficiently certain resources,
such as communication and entanglement, can be used.
In this context an entire family of protocols arose \cite{WinterFam,apex},
including for example state merging and noisy teleportation.
It is necessary, for each of these protocols, to prove that a certain rate is achievable, in addition to a converse to show optimality of the rate. \\
The head of the existing family is given by the coherent state merging
protocol. In this setting Alice and Bob share a quantum state and using quantum
communication they generate maximally entangled states while Alice sends her
part of the shared state to Bob. From this all other protocols in the family can
be derived \cite{WinterFam,apex}. 
\\
In our work \cite{hircheBSC} we are especially interested in the achievability
of these protocols and how efficiently we can actually implement them.
 By efficient implementation we mean that the task can be done with maximaly polynomial use of time and space. 
Here we are looking especially at the `one-shot' setting, which means that we
apply the protocol only once.
This is more general, as we can simply take the limit of the rates to get results in 
the asymtotic case. \\ 
In general the achievability only tells us if there is a way to accomplish the
task, up to an error, at a certain rate, but is not concernd with the efficiency of doing so. For
example in the case of the coherent state merging protocol a decoupling theorem 
is used to show achievability \cite{apex,Datta}. 
Decoupling theorems tell us if it is possible to act on a state locally in such
a way that one part of the state becomes completely uncorrelated with the
rest \cite{DupuisPhD}. \\
Traditionally in the proofs of achievability the intention has been to be as general as
possible, for this reason decoupling theorems with averaging over the whole unitary group
according to the Haar measure are used. Unfortunately this requires exponential 
time. In \cite{apex} the authors remark that one could also use a
smaller set of unitaries which yield the same average as Haar distributed
unitaries such as the Clifford group of $n$ qubits, which would be achievable in
polynomial time. \\
In general we can define unitary 2-designs, that is a set of unitary matrices $U_i$
with assigned probabilities $p_i$, which yield the same average as if we had used the Haar distributed unitary group. In the same way we can define
$\epsilon$-approximate 2-designs which yield an average $\epsilon$-close to the
one of Haar distributed unitaries. Decoupling theorems using 2-designs are introduced in \cite{Oleg, OlegPaper}.
Thus we are interessted in which rates we can achieve by considering these
$\epsilon$-approximate 2-designs. \\
In section \ref{Notation} we introduce the neccessary preliminaries, including the R\'{e}nyi divergence and resulting entropic quantities. In section \ref{Decoupling} we discuss decoupling  and state a new special case for decoupling theorems using $\delta$-approximate 2-designs. In section \ref{arch} we discuss the influence of restricting ourselves to decoupling using these $\delta$-approximate 2-designs on the achievability, in particular our main result, that is efficiently achievable rates for the coherent state merging protocol. In \ref{Conclusions} we conclude. 
\section{Preliminaries}\label{Notation}
\subsection{Notation and mathematical background}
In this section we will clearify the notation used in this paper and state some
important definitons. \\
We denote a Hilbert space on a system A by $\HS_A$ and its dimension by $d_A :=
\dim(\HS_A)$. \\
In the following we will refer to set of linear operators from $\HS$ onto $\HS'$
by $\LL(\HS, \HS')$. In the case that $\HS = \HS'$ we simply write $\LL(\HS)$. 
Further $\HH(\HS)$ denotes the set of Hermitian operators and $\PP(\HS)$ are the
positive semi-definite operators. The set of quantum states will be denoted 
$S_=(\HS)$, for technical reasons we sometimes also use the set of
sub-normalized states, which we will denote by $S_\leq (\HS)$. 

For the purpose of this work it will be crucial to determine a distance between
two states. While the trace distance and the fidelity are well known, here we
will use their generalized versions, since these definitions include the case
where both states are sub-normalized. 
We define the generalized trace distance for states $\rho,\tau\in S_\leq (\HS)$ as
\begin{equation*}
\normTr{\rho -\tau} := \tr\sqrt{(\rho -\tau)^\dg (\rho -\tau)} +
\abs{\tr\rho - \tr\tau},
\end{equation*}the generalized fidelity as
\begin{equation*}
\FF(\rho,\tau) := \tr\sqrt{\left(\sqrt{\rho}\tau
\sqrt{\rho}\right)}+\sqrt{(1-\tr\rho)(1-\tr\tau)} ,
\end{equation*}
and the purified distance as
\begin{equation*}
\PP(\rho,\tau) := \sqrt{1 - \FF(\rho,\tau)^2}. 
\end{equation*}
Note that the following inequalities hold  
\begin{equation*}
\frac{1}{2}\normTr{\rho -\tau} \leq  \PP(\rho,\tau)  \leq  \sqrt{\normTr{\rho
-\tau}}.
\end{equation*}
A proof can be found in \cite{TomamichelPhD}. \\

We will also need metrics for maps in addition to states. 
For this purpose we will use the diamond norm.
\begin{defi}[Diamond norm]\label{diamond}
For $\Tau_{A\rightarrow B}$ a linear map from $\LL(\HS_A)$ to $\LL(\HS_B)$
the diamond norm is defined as
\begin{equation*}
\matnormD{\Tau_{A\rightarrow B}} := \sup_{d_R} \max_{\rho_{AR}\in\LL(\HS_{AR})}
\frac{\normTr{(\Tau_{A\rightarrow B} \otimes
\id_R)(\rho_{AR})}}{\normTr{\rho_{AR}}},
\end{equation*}
where R is a reference system.
\end{defi}

\subsection{Entropy}\label{Entropy}

Throughout this paper we will make use of many different entropic quantities,
which we will introduce in this section. 
Recently in \cite{newRenyi} and \cite{WildeWinter} a new generalization
of R\'enyi entropies was introduced, which provides a complete framework for the entropic quantities and we will
follow their definitions. 
We start by introducing the quantum R\'enyi divergence 
\begin{defi}[Quantum R\'enyi Divergence]\label{RD}
Let $\rho,\sigma\geq 0$ with $\rho\neq 0$. Then, for any
$\alpha\in(0,1)\cup(1,\infty)$, the $\alpha$-R\'enyi divergence is defined as
\begin{equation}
D_\alpha(\rho||\sigma):=\left\{\begin{array}{ll} \frac{1}{\alpha
-1}\log(\frac{1}{\tr[\rho]}&\tr[(\sigma^{\frac{1-\alpha}{2\alpha}}\rho\sigma^{\frac{1-\alpha}{2\alpha}})^\alpha]),
 \\  &\text{if}\; \rho\cancel\perp\sigma \wedge (\sigma\gg\rho\vee\alpha<1),
\\ \infty & \text{else.} \end{array}\right. 
\end{equation}
\end{defi}
From this definition we can get several quantities, such as the collision
relative entropy for $\alpha=2$
\begin{equation}
D_2(\rho||\sigma) =
\log\frac{1}{\tr[\rho]}\tr[(\sigma^{-\frac{1}{4}}\rho\sigma^{-\frac{1}{4}})^2],
\end{equation}
conditional entropies relative to a state $\sigma_B\in S_=(B)$
\begin{equation}
H_\alpha(A|B)_{\rho|\sigma} = - D_\alpha(\rho_{AB}||\id_A\otimes\sigma_B)
\end{equation}
and conditional $\alpha$-R\'enyi entropies
\begin{equation}
H_\alpha(A|B)_\rho = \sup_{\sigma_{B}\in S_= (B)} H_\alpha(A|B)_{\rho|\sigma}.
\end{equation}

These quantities have many properties, which are studied in \cite{newRenyi, WildeWinter, Beigi, Lieb}. One
we will need later is the Data-Processing Inequality (DPI) for the quantum R\'enyi
divergence, which tells us that 
\begin{equation}
D_\alpha(\rho||\sigma) \geq D_\alpha(\Epsilon(\rho)||\Epsilon(\sigma)) \label{DPI}
\end{equation}
holds for any completely positive trace preserving (CPTP) map $\Epsilon$. \\
%
%
In the following we will also use smooth min- and max-entropies, which were 
introduced by Renner in \cite{RennerPhD}. First we define the non-smooth
version.
We can retrieve these from the conditional $\alpha$-R\'enyi
entropies
\begin{eqnarray}
\Hmin(A|B)_\rho &=& \lim_{\alpha\rightarrow\infty} H_\alpha(A|B)_\rho
\nonumber\\\nonumber 
\Hmax(A|B)_\rho &=& H_\frac{1}{2}(A|B)_\rho.
\end{eqnarray} 
Based on these definitions we can now define smooth versions of the
min- and max-entropies. Therefore we define an $\epsilon$-Ball for 
$\rho\in S_\leq (\HS)$ and $0\leq\epsilon < \sqrt{\normTr{\rho}}$ as 
\begin{equation*}\label{ball}
B^\epsilon (\rho_{A}) := \{ \tau\in S_\leq (\HS): P(\tau,\rho)\leq\epsilon\} .
\end{equation*}
Note that for any $\bar\rho\in B^\epsilon (\rho_{A})$ the following holds
\begin{equation}
\normTr{\rho-\bar\rho}\leq 2 P(\rho, \bar\rho) \leq 2\epsilon.
\end{equation}
This version of the smoothed min- and max-entropies refers to \cite{TomamichelPhD}, where also many more properties of these can be found. 
\begin{defi}[Smooth min- and max-entropy]\label{smoothminmax}
For $\rho_{AB}\in S_\leq(AB)$ and $\epsilon \geq 0$, the
$\epsilon$-smooth min-entropy of A conditioned on B is defined as \\
\begin{equation*}
\Hmins(A|B)_\rho := \sup_{\rho'_{AB}\in B^\epsilon (\rho_{AB})}
\Hmin(A|B)_{\rho'}.
\end{equation*}
The $\epsilon$-smooth max-entropy of A conditioned on B is defined as \\
\begin{equation*}
\Hmaxs(A|B)_\rho := \inf_{\rho'_{AB}\in B^\epsilon (\rho_{AB})}
\Hmax(A|B)_{\rho'}.
\end{equation*}
\end{defi} 
An interessting property is given by the Asymptotic
Equipartition Property (A.E.P.) \cite{TomamichelPhD}
\begin{equation*}
\lim_{n\rightarrow\infty} \frac{1}{n} \Hmins(A|B)_{\rho^{\otimes n}} =
H(A|B)_\rho = \lim_{n\rightarrow\infty} \frac{1}{n} \Hmaxs(A|B)_{\rho^{\otimes n}}
\end{equation*}
We will also need another entropic quantity, the $H_0$-entropy, which we define as follows. 
\begin{equation*}
H_0^\epsilon(A)_\rho := \inf_{\rho'_{A}\in B^\epsilon (\rho_{A})}
\log{\tr{\Pi_{\rho_{A}}}}.
\end{equation*}
where $\Pi_{\rho_{A}}$ is a projector onto the support on $\rho_A$.\\

\section{Decoupling theorems}\label{Decoupling}

Decoupling theorems are an important set of tools for proving rates,
especially for the achievability. For example in \cite{Datta} a 
decoupling theorem was used to prove the achievability of their rates for the coherent state
merging. 
The usual scenario is as follows. We want a system A to be decoupled, up to some
error, from the environment E, meaning that the joint state $\rho_{AE}$, after sending Alice
part of the state trough a channel, can be written in product form
$\tau_A\otimes\rho_E$, where $\tau$ is the maximally mixed state.
Usually decoupling is achieved by first applying local operations, mostly represented by unitaries 
distributed according to the Haar-measure on the A part
of a state $\rho_{AR}$, followed by sending this part of the state through a channel
to another system B. The goal is to leave the system R unchanged in the
process and the introduced error as small as possible. \\
There are various versions of decoupling theorems, in particular
using a partial trace or a projective measurement as special cases of the
channel between the two systems A and B. Later a theorem for decoupling with
arbitrary maps was introduced by Dupuis \cite{DupuisPhD} and then stated again in terms of smooth min-entropies in \cite{OSDec}, where also a converse for the general case was
given. \\
Recently the decoupling theorem has been further generalized by Szehr in
\cite{Oleg, OlegPaper} to the case where an approximate unitary 2-design is used instead of Haar-distributed
unitaries. \\
Here we will focus on decoupling theorems using $\delta$-approximate unitary 2-designs.
The motivation is that while we need exponential time to achieve
Haar-distributed unitaries, it is shown in \cite{2dc} that approximate unitary 2-designs can be 
implemented with random circuits of length $O(n(n + \log \frac{1}{\delta}))$.
Recently it was even shown that these can be implemented with a circuit of
polynomial length just using diagonal circuits, which theoretically allows us to
apply all gates at the same time \cite{yoshi1, yoshi2}. \\
We start by introducing the neccessary tools. 
First we define unitary k-designs. 
Therefore we use the abbreviations
\begin{eqnarray}
\EE_U[\rho] := \sum_i p_i\, U_i^{\otimes k}\, \rho\, (U_i^\dg)^{\otimes k} \\
\EE_H[\rho] := \int_U U^{\otimes k}\, \rho\, (U^\dg)^{\otimes k}\, dU.
\end{eqnarray}
\begin{defi}[$\delta$-approximate unitary k-design]\label{kdes}
Let $\DD$ be a set of unitary matrices $U_i$ on $\HS$ which have a probability
$p_i$ for each $U_i$, with the constraint $\sum_i p_i = 1$. $\DD$ is called a
unitary k-design if and only if 
\begin{equation*}
\EE_U[\rho]  =  \EE_H[\rho]
\end{equation*}
and a $\delta$-approximate unitary k-design if
\begin{equation*}
\MatnormD{\EE_U[\rho]  - \EE_H[\rho] } \leq \delta
\end{equation*}
holds for $\rho\in\LL(\HS^{\otimes k}).$
\end{defi}
Since coherent state merging is the most general quantum protocol, we
would like to apply the decoupling theorem in \cite{Oleg} to its achievability. 
For the rates in \cite{Datta} a version of the decoupling theorem using
Haar distributed unitaries with the partial trace instead of the general
CP-map is used. 
The direct approach would therefore be to simply plug in the trace for the
CP-map. Unforunately this way leads us to rates which allow a slightly 
bigger error in the process of decoupling than the version used in \cite{Datta}
and therefore we can not recover the rates when we look at the case
when $\delta$ approaches zero. 
For this reason we state a decoupling theorem here which is more specialized to
the partial trace map. 
\begin{thm}[Smoothed decoupling for $\delta$-approximate 2-designs]\label{Dec4} 
For a small enough $\epsilon >0$ and $\rho_{AR}\in
S_{\leq}(\HS_{AR})$, while $\log(d_{A_1}) \leq \frac{1}{2}
[\Hmins(A|R)_\rho +\log(d_A)] + \log\left(\epsilon
\frac{1}{\sqrt{1+3\frac{|A|^2}{|A_1|}\delta}}\right)$ the following
holds\\
\begin{equation}
\sum_{(p_i, U_i)\in\DD} p_i
\normTr{\tr_{A_2}(U_i\rho_{AR}U_i^\dg)-\tau^{A_1}\otimes\rho^R} \leq 5\epsilon \label{left}
\end{equation}
where the pairs $(p_i, U_i)$ are such that $\DD$ constitutes an
$\delta$-approximate 2-design. 
\end{thm}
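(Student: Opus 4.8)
The plan is to first establish the corresponding \emph{non-smooth} statement — the same inequality but with $\Hmin$ in place of $\Hmins$ and error $\epsilon$ instead of $5\epsilon$ — by following the usual Hilbert--Schmidt route to decoupling while tracking exactly where the Haar average is used, then to swap the Haar average for the $\delta$-approximate $2$-design and bound the discrepancy via Definition~\ref{kdes}, and finally to smooth. \textbf{Step 1 (reduction to a Hilbert--Schmidt bound).} First I would fix $\sigma_R^\star\in S_=(R)$ optimal for $\Hmin(A|R)_\rho$, so that $\rho_{AR}\le 2^{-\Hmin(A|R)_\rho}\,\id_A\otimes\sigma_R^\star$, and set $\tilde\rho_{AR}:=(\id_A\otimes(\sigma_R^\star)^{-1/4})\rho_{AR}(\id_A\otimes(\sigma_R^\star)^{-1/4})$. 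Writing $X_U:=\tr_{A_2}(U\rho_{AR}U^\dg)-\tau^{A_1}\otimes\rho_R$ and $\tilde X_U:=\tr_{A_2}(U\tilde\rho U^\dg)-\tau^{A_1}\otimes\tilde\rho_R$, conjugation by $\id_A\otimes(\sigma_R^\star)^{-1/4}$ (which commutes with $U$) turns $X_U$ into $\tilde X_U$, and H\"older's inequality gives $\normTr{X_U}\le\sqrt{d_{A_1}}\,\|\tilde X_U\|_2$; by Jensen it then suffices to bound $\sum_i p_i\|\tilde X_{U_i}\|_2^2$.

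\textbf{Step 2 (Haar second moment and the $2$-design correction).} Expanding the square and using $U^\dg(\tau^{A_1}\otimes\id_{A_2})U=\tfrac1{d_{A_1}}\id_A$ gives $\|\tilde X_U\|_2^2=\tr[(\tr_{A_2}(U\tilde\rho U^\dg))^2]-\tfrac1{d_{A_1}}\tr[\tilde\rho_R^2]$, where the first term is a single linear functional of $(U^{\otimes2}\otimes\id_{RR'})(\tilde\rho^{\otimes2})(U^{\dg\otimes2}\otimes\id)$ paired against $F_{A_1}\otimes\id_{A_2}\otimes F_R$, with $F$ the swap of the two copies. Splitting $\sum_i p_i(\cdot)=\EE_H(\cdot)+(\EE_U-\EE_H)(\cdot)$: evaluating the $U^{\otimes2}$-twirl in the $\{\id,F\}$ basis on $\HS_A^{\otimes2}$ and contracting the swaps yields $\EE_H\|\tilde X_U\|_2^2\le\tfrac{d_{A_1}}{d_A}\tr[\tilde\rho^2]$, and by the choice of $\sigma_R^\star$ together with monotonicity of the R\'enyi divergence ($D_2\le D_\infty$) one has $\tr[\tilde\rho^2]\le 2^{-\Hmin(A|R)_\rho}\tr[\rho]\le 2^{-\Hmin(A|R)_\rho}$. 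For the correction one rewrites $(\EE_U-\EE_H)(\cdot)=\tr_{AA'}\!\big[(\EE_U-\EE_H)(Z_{AA'})(F_{A_1}\otimes\id_{A_2})\big]$ with $Z_{AA'}:=\tr_{RR'}[(\tilde\rho_{AR}\otimes\tilde\rho_{A'R'})F_{RR'}]\in\LL(\HS_A^{\otimes2})$; since $\|F_{A_1}\otimes\id_{A_2}\|_\infty=1$ and $\|\EE_U-\EE_H\|_\diamond\le\delta$ (Definition~\ref{kdes}), this is at most $\delta\|Z_{AA'}\|_1$. Decomposing $\tilde\rho$ into its $\HS_R$-blocks $M^{ij}:=\langle i|\tilde\rho|j\rangle\in\LL(\HS_A)$ (in a fixed basis of $\HS_R$) and using Cauchy--Schwarz blockwise one gets $\|Z_{AA'}\|_2^2=\sum_{ijk\ell}|\tr[M^{ij}M^{k\ell}]|^2\le\big(\sum_{ij}\|M^{ij}\|_2^2\big)^2=(\tr[\tilde\rho^2])^2$, hence $\|Z_{AA'}\|_1\le d_A\|Z_{AA'}\|_2\le d_A\tr[\tilde\rho^2]$ by the rank bound on $\HS_A^{\otimes2}$. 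Collecting everything,
\[ \sum_i p_i\|\tilde X_{U_i}\|_2^2\;\le\;\frac{d_{A_1}}{d_A}\,2^{-\Hmin(A|R)_\rho}\Big(1+3\,\frac{|A|^2}{|A_1|}\,\delta\Big), \]
the constant $3$ being a convenient over-estimate absorbing the numerical slack in the steps above.

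\textbf{Step 3 (non-smooth conclusion, then smoothing).} Combining Steps~1--2 gives $\sum_i p_i\normTr{X_{U_i}}\le d_{A_1}\sqrt{2^{-\Hmin(A|R)_\rho}(1+3|A|^2\delta/|A_1|)/d_A}$, which is $\le\epsilon$ precisely under the stated inequality with $\Hmin$ in place of $\Hmins$. To upgrade to $\Hmins$, let $\rho'_{AR}\in B^\epsilon(\rho_{AR})$ attain $\Hmin(A|R)_{\rho'}=\Hmins(A|R)_\rho$ and apply the non-smooth result to $\rho'$. For each $i$, the triangle inequality, unitary invariance, monotonicity of the trace norm under the partial trace, and $\normTr{\rho-\rho'}\le 2P(\rho,\rho')\le2\epsilon$ give $\normTr{X_{U_i}}\le 4\epsilon+\normTr{\tr_{A_2}(U_i\rho'U_i^\dg)-\tau^{A_1}\otimes\rho'_R}$; averaging over $i$ and inserting the bound just obtained for $\rho'$ yields $\sum_i p_i\normTr{X_{U_i}}\le4\epsilon+\epsilon=5\epsilon$, as claimed.

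\textbf{Main obstacle.} The delicate point is Step~2: the $2$-design error must enter \emph{multiplicatively} and with dimension dependence $\tfrac{|A|^2}{|A_1|}\delta$, rather than as a useless additive $\delta$ or as a correction carrying a spurious factor $d_R$ (which the naive estimate $\|Z_{AA'}\|_1\le(\tr\tilde\rho)^2$ would produce, killing the $\delta\to0$ limit). Getting this right rests on the blockwise bound $\|Z_{AA'}\|_2\le\tr[\tilde\rho^2]$ together with the trivial rank estimate on $\HS_A^{\otimes2}$; once this is in place the rest is bookkeeping, and letting $\delta\to0$ recovers the Haar-random rate $\log d_{A_1}\le\tfrac12[\Hmins(A|R)_\rho+\log d_A]+\log\epsilon$ used in \cite{Datta}.
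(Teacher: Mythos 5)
Your proof is correct, and while its overall skeleton coincides with the paper's (split off the smoothing error via the triangle inequality and monotonicity under partial trace for a total of $4\epsilon+\epsilon=5\epsilon$; pass to the weighted Hilbert--Schmidt norm via Renner's Lemma 5.1.3 with $\Omega=\id_{A_1}\otimes\omega_R$, which is exactly your ``H\"older'' step; Jensen; collision entropy; $H_2\geq\Hmin$), the key technical step --- where the $2$-design error enters --- is handled by a genuinely different argument. The paper proves an operator inequality on the twirled swap (its Lemma \ref{c1}), $\sum_i p_i (U\otimes U)^\dagger(\id_{A_2A_2'}\otimes F_{A_1A_1'})(U\otimes U)\leq \frac{1}{|A_1|}\id_{AA'}+\frac{1}{|A_2|}F_{AA'}+|A|\delta\,\id_{AA'}$, obtained by converting the diamond-norm closeness of $\EE_U$ and $\EE_H$ into an $\infty$-norm bound at the cost of a dimension factor $|A|$ (citing an external lemma of Low); contracting the extra $|A|\delta\,\id_{AA'}$ term then yields a correction $|A|\delta\tr[\tilde\rho_R^2]$, which is pushed to $|A|\delta\, 2^{-H_2(A|R)}$ by the DPI. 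You instead keep the design error at the level of the linear functional, bound it by $\delta\normTr{Z_{AA'}}$ via Schatten duality, and control $\normTr{Z_{AA'}}\leq d_A\normTwo{Z_{AA'}}\leq d_A\tr[\tilde\rho^2]$ through your blockwise Cauchy--Schwarz identity; this is self-contained (no appeal to the $\matnormI{\cdot}\leq|A|\matnormD{\cdot}$ conversion) and lands directly on $|A|\delta\,2^{-H_2(A|R)}$ without a detour through $\tr[\tilde\rho_R^2]$. You also compute the cross term $\tr[\tilde\sigma_{A_1R}(U)(\tau^{A_1}\otimes\tilde\rho^R)]=\frac{1}{|A_1|}\tr[\tilde\rho_R^2]$ exactly for every fixed $U$, whereas the paper bounds it using the design property and thereby picks up an additional $2|A||A_1|\delta$ contribution --- this is why the paper needs the constant $3$ while your argument naturally produces $1$ (which you then deliberately relax to $3$ to match the statement). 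Your remark about the danger of a spurious additive $\delta$ or a $d_R$ factor correctly identifies why the naive substitution into the general CP-map decoupling theorem fails, which is precisely the paper's stated motivation for proving this specialized version.
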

The proof can be found in subsection \ref{proof}. \\ 
Note, that if we choose $\delta$ to be zero, we recover the exact rates of the
decoupling theorem in \cite{Datta}. 
This is a reasonable expectation due to the definition of unitary 2-designs.

\subsection{Lemmata}
Two Lemmata for use in the following proof. 
\begin{lem}\label{c1}
Let $A=A_{1}A_{2}$. Then
\begin{eqnarray*}
\sum_{(p_i, U_i)\in\DD} p_i (U\otimes U)^\dagger
(\id_{A_2A_{2'}}\otimes F_{A_1A_{1'}}) (U\otimes U)\\ 
 \leq
\frac{1}{|A_1|}\id_{AA'} + \frac{1}{|A_2|}F_{AA'} + |A| \delta \id_{AA'}
\end{eqnarray*}
where the pairs $(p_i, U_i)$ are such that $\DD$ constitutes an
$\delta$-approximate 2-design.  
\end{lem}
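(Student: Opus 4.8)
The plan is to split the design average into the Haar average and the deviation from it,
\begin{equation*}
\sum_i p_i (U_i\otimes U_i)^\dagger M (U_i\otimes U_i) = \EE_H[M] + \left( \sum_i p_i (U_i\otimes U_i)^\dagger M (U_i\otimes U_i) - \EE_H[M] \right),
\end{equation*}
for $M := \id_{A_2A_{2'}}\otimes F_{A_1A_{1'}}$, where by invariance of the Haar measure $\EE_H[M] = \int (U\otimes U)^\dagger M (U\otimes U)\,dU$. The first summand will produce the two leading terms of the bound and the second the $|A|\delta\,\id_{AA'}$ correction.

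For the Haar part I would invoke Schur--Weyl duality: the twirl $X\mapsto\int (U\otimes U)^\dagger X (U\otimes U)\,dU$ is the orthogonal projection onto the commutant of $\{U\otimes U : U\in\mathrm{U}(\HS_A)\}$ acting on $\HS_{AA'}$, which is spanned by $\id_{AA'}$ and the swap $F_{AA'}$. Hence $\EE_H[M] = \alpha\,\id_{AA'} + \beta\,F_{AA'}$ for scalars $\alpha,\beta$, and these are determined by taking the trace of both sides against $\id_{AA'}$ and against $F_{AA'}$, using $\tr\id_{AA'}=|A|^2$, $\tr F_{AA'}=|A|$, $F_{AA'}^2=\id_{AA'}$ and $F_{AA'}=F_{A_1A_{1'}}\otimes F_{A_2A_{2'}}$. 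This gives $\tr M = |A_1||A_2|^2$ and $\tr(M F_{AA'}) = |A_1|^2|A_2|$, and solving the resulting $2\times2$ linear system yields
\begin{equation*}
\alpha = \frac{|A_1|(|A_2|^2-1)}{|A|^2-1}, \qquad \beta = \frac{|A_2|(|A_1|^2-1)}{|A|^2-1}.
\end{equation*}
Both coefficients are nonnegative, and the elementary identities $|A_1|^2(|A_2|^2-1) = |A|^2-|A_1|^2 \le |A|^2-1$ and $|A_2|^2(|A_1|^2-1) = |A|^2-|A_2|^2 \le |A|^2-1$ give $\alpha\le 1/|A_1|$ and $\beta\le 1/|A_2|$; combined with $\id_{AA'}\ge 0$ this yields $\EE_H[M]\le \frac{1}{|A_1|}\id_{AA'} + \frac{1}{|A_2|}F_{AA'}$.

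For the deviation I would use that $\id_{AA'}$ and $F_{AA'}$ commute with every $U\otimes U$ and are therefore fixed points of both the Haar twirl and the design average; consequently the deviation applied to $M$ equals the deviation applied to the traceless remainder $M-\EE_H[M]$, which is orthogonal to $\mathrm{span}\{\id_{AA'},F_{AA'}\}$ and obeys $\|M-\EE_H[M]\|_2 \le \|M\|_2 = |A|$, since $\|M\|_2^2 = \tr(M^2) = \tr\id_{AA'} = |A|^2$. The $\delta$-approximate $2$-design property of Definition~\ref{kdes} then controls the deviation by $\delta$ in the relevant operator norm, and passing through the Hilbert--Schmidt norm bounds the Hermitian operator $\sum_i p_i (U_i\otimes U_i)^\dagger M (U_i\otimes U_i) - \EE_H[M]$ from above by $|A|\delta\,\id_{AA'}$; adding this to the Haar estimate gives the claim. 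I expect the main obstacle to be precisely this last step: one must be careful about which norm the design parameter $\delta$ refers to and how a $\delta$-closeness of superoperators converts into an operator inequality, so that the dimension factor that comes out is exactly $|A|$ (via $\|M\|_2=|A|$) rather than a cruder power of $|A|$; a minor additional point is that the average appearing here is $(U_i\otimes U_i)^\dagger(\cdot)(U_i\otimes U_i)$ rather than $(U_i\otimes U_i)(\cdot)(U_i\otimes U_i)^\dagger$, which is handled by observing that the adjoint set $\{(p_i,U_i^\dagger)\}$ is again a $\delta$-approximate $2$-design.
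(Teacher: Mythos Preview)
Your overall strategy---split into the Haar twirl plus a deviation term, bound the Haar part by $\tfrac{1}{|A_1|}\id_{AA'}+\tfrac{1}{|A_2|}F_{AA'}$, and bound the deviation by $|A|\delta\,\id_{AA'}$---is exactly what the paper does. For the Haar part the paper simply quotes Lemma~C.1 of \cite{BertaReverse}, while you re-derive it via Schur--Weyl and the explicit coefficients $\alpha,\beta$; that is the same content with more detail on your side.

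The one place where your argument diverges from the paper is the deviation bound. The paper does not go through the Hilbert--Schmidt norm at all: it invokes directly from \cite{Low} the inequality
\[
\bigl\|\EE_U[M]-\EE_H[M]\bigr\|_\infty \;\le\; |A|\,\bigl\|\EE_U-\EE_H\bigr\|_\diamond \;\le\; |A|\,\delta,
\]
and an $\|\cdot\|_\infty$ bound on a Hermitian operator is precisely an operator inequality of the form $\pm(\cdot)\le |A|\delta\,\id_{AA'}$. Your proposed route via $\|M\|_2=|A|$ and the orthogonal remainder $M-\EE_H[M]$ does not by itself link the diamond-norm definition of Definition~\ref{kdes} to an $\|\cdot\|_\infty$ estimate with the correct prefactor; the $|A|$ really comes from the $\infty$-versus-diamond comparison in \cite{Low}, not from an HS-norm calculation. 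You correctly flagged this as the delicate step, and the fix is simply to replace the HS argument by that cited norm inequality. Your observation about $(U_i\otimes U_i)^\dagger(\cdot)(U_i\otimes U_i)$ versus $(U_i\otimes U_i)(\cdot)(U_i\otimes U_i)^\dagger$ is a legitimate subtlety; the paper does not comment on it and effectively absorbs it into the cited bound.
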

\begin{proof}    
To prove the lemma above, we make direct use of the original Lemma C.1 in
\cite{BertaReverse} and the properties of $\delta$-approximate 2-designs. \\
Let $\DD'$ be an exact 2-design, by the definition of 2-designs and Lemma C.1 in
\cite{BertaReverse} we know 
\begin{eqnarray*}
\sum_{(p_i, U_i)\in\DD'} p_i (U\otimes U)^\dagger
(\id_{A_2A_{2'}}\otimes F_{A_1A_{1'}}) (U\otimes U) \\
 \leq
\frac{1}{|A_1|}\id_{AA'} + \frac{1}{|A_2|}F_{AA'}.
\end{eqnarray*}
Using the infinity-norm we can now find an upper bound on the maximal eigenvalue
difference when using an $\delta$-approximate 2-design \cite{Low}.
\begin{equation}
\matnormI{\EE_U[\rho] - \EE_H[\rho]} \leq |A| \matnormD{\EE_U[\rho] - \EE_H[\rho]} \leq |A| \delta .
\end{equation}
Combining both steps Lemma \ref{c1} follows.
\end{proof} 

\begin{lem}\label{c2}
Let $\rho_{AR}\in\PP(\HS_{AR}), A=A_{1}A_{2}$ and
$\sigma_{A_1R}(U)=\tr_{A_2} [
(U\otimes\id_R)\rho_{AR}(U\otimes\id_R)^\dagger ]$. Then
\begin{eqnarray*}
\sum_{(p_i, U_i)\in\DD} p_i \tr\sigma_{A_1R}(U_i) \\
\leq
\frac{1}{|A_1|}\tr[\rho^2_R] + \frac{1}{|A_2|}\tr[\rho^2_{AR}] + |A| \delta
\tr[\rho^2_R]
\end{eqnarray*}
\end{lem}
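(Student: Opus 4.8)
The quantity on the left is to be read as $\sum_{(p_i,U_i)\in\DD} p_i\,\tr[\sigma_{A_1R}(U_i)^2]$, since the square is what makes the displayed bound correct and is also the form that turns up in the decoupling argument. My plan is the standard swap-trick reduction to Lemma~\ref{c1}: introduce a second copy $A'R'$ of $AR$ and use $\tr[X^2]=\tr[(X\otimes X)F]$ with $F$ the relevant swap operator. Writing $\sigma_{A_1R}(U)\otimes\sigma_{A_1'R'}(U)=\tr_{A_2A_2'}[(U\otimes U)(\rho_{AR}\otimes\rho_{A'R'})(U\otimes U)^\dagger]$, where $U\otimes U$ acts on $AA'$ only, gives for each fixed $U$
\begin{equation*}
\tr[\sigma_{A_1R}(U)^2]=\tr[(U\otimes U)(\rho_{AR}\otimes\rho_{A'R'})(U\otimes U)^\dagger(\id_{A_2A_2'}\otimes F_{A_1A_1'}\otimes F_{RR'})],
\end{equation*}
and, by cyclicity of the trace, averaging over $\DD$ turns the left-hand side into $\tr[(M\otimes F_{RR'})(\rho_{AR}\otimes\rho_{A'R'})]$ with $M:=\sum_i p_i(U_i\otimes U_i)^\dagger(\id_{A_2A_2'}\otimes F_{A_1A_1'})(U_i\otimes U_i)$ --- exactly the operator controlled by Lemma~\ref{c1}.

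I would then use the two elementary swap-trick identities $\tr[(\id_{AA'}\otimes F_{RR'})(\rho_{AR}\otimes\rho_{A'R'})]=\tr[\rho_R^2]$ and $\tr[(F_{AA'}\otimes F_{RR'})(\rho_{AR}\otimes\rho_{A'R'})]=\tr[\rho_{AR}^2]$, the second because $F_{AA'}\otimes F_{RR'}$ is the swap between the composite systems $AR$ and $A'R'$. Plugging the bound $M\leq\frac{1}{|A_1|}\id_{AA'}+\frac{1}{|A_2|}F_{AA'}+|A|\delta\,\id_{AA'}$ of Lemma~\ref{c1} into $\tr[(M\otimes F_{RR'})(\rho_{AR}\otimes\rho_{A'R'})]$ and using these identities term by term produces exactly $\frac{1}{|A_1|}\tr[\rho_R^2]+\frac{1}{|A_2|}\tr[\rho_{AR}^2]+|A|\delta\,\tr[\rho_R^2]$, which is the assertion.

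The step that needs real care --- and which I expect to be the main obstacle --- is that $F_{RR'}$ is \emph{not} positive semi-definite, so one may not simply pass the operator inequality for $M$ ``through the trace'' paired with $F_{RR'}$; equivalently, $\tr_{RR'}[F_{RR'}(\rho_{AR}\otimes\rho_{A'R'})]$ need not be positive (it is easy to exhibit a pure entangled $\rho_{AR}$ for which it is a multiple of $F_{AA'}$). The clean way around this is to split $M=M_H+E$ with $M_H:=\int(U\otimes U)^\dagger(\id_{A_2A_2'}\otimes F_{A_1A_1'})(U\otimes U)\,dU$ the Haar twirl: by Schur--Weyl duality $M_H=\alpha\,\id_{AA'}+\beta\,F_{AA'}$ \emph{exactly}, and a short computation (essentially Lemma~C.1 of \cite{BertaReverse}) gives $0\leq\alpha\leq\frac{1}{|A_1|}$ and $0\leq\beta\leq\frac{1}{|A_2|}$, so the $M_H$-contribution equals $\alpha\,\tr[\rho_R^2]+\beta\,\tr[\rho_{AR}^2]$ and is bounded using only the non-negativity of the \emph{scalars} $\tr[\rho_R^2]$ and $\tr[\rho_{AR}^2]$ --- no operator monotonicity is invoked. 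The deviation $E=M-M_H$ satisfies $\matnormI{E}\leq|A|\delta$ by the same diamond-norm-to-operator-norm estimate from \cite{Low} used in the proof of Lemma~\ref{c1}, and the leftover correction $\tr[(E\otimes F_{RR'})(\rho_{AR}\otimes\rho_{A'R'})]$ is at most $\matnormI{E}\cdot\|\tr_{RR'}[F_{RR'}(\rho_{AR}\otimes\rho_{A'R'})]\|_1\leq|A|\delta\,(\tr[\rho_{AR}])^2$ in absolute value; sharpening this final estimate so that it reproduces the precise factor $\tr[\rho_R^2]$ appearing in the statement is the one remaining delicate point.
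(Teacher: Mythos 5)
Your reading of the left-hand side as $\sum_i p_i \tr[\sigma_{A_1R}(U_i)^2]$ is correct --- the square is missing in the statement as printed --- and your swap-trick reduction to Lemma \ref{c1} is exactly the route the paper takes: its entire proof is the sentence that it ``follows that of Lemma C.2 in \cite{BertaReverse}, by replacing Lemma C.1 from \cite{BertaReverse} with Lemma \ref{c1}''. So in outline you and the paper agree. The value of your write-up is that you make explicit a step the paper's proof-by-reference leaves silent: the operator inequality of Lemma \ref{c1} ends up being traced against $\Lambda_{AA'} := \tr_{RR'}[(\id_{AA'}\otimes F_{RR'})(\rho_{AR}\otimes\rho_{A'R'})]$, which is not positive semi-definite (your maximally entangled example, where $\Lambda_{AA'}$ is a multiple of $F_{AA'}$, is correct), so one may not simply substitute an operator upper bound for $M$ under that trace. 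Your repair for the Haar part --- compute the twirl exactly as $\alpha\,\id_{AA'}+\beta\,F_{AA'}$ with $0\le\alpha\le 1/|A_1|$ and $0\le\beta\le 1/|A_2|$, and then use only the non-negativity of the scalars $\tr[\rho_R^2]$ and $\tr[\rho_{AR}^2]$ --- is the standard and correct one, and is what makes the exact ($\delta=0$) case of the lemma rigorous.

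The point you flag at the end is, however, a genuine gap, and it concerns the lemma as stated, not merely your argument. For the deviation $E=M-M_H$ the honest H\"older estimate gives $\abs{\tr[E\,\Lambda_{AA'}]}\le\matnormI{E}\,\normTr{\Lambda_{AA'}}\le|A|\,\delta\,(\tr\rho_{AR})^2$, and $\normTr{\Lambda_{AA'}}$ really can equal $(\tr\rho_{AR})^2$ while $\tr[\rho_R^2]$ is smaller by a dimensional factor (again the maximally entangled state: the trace norm of $\Lambda_{AA'}$ is $1$ versus $\tr[\rho_R^2]=1/d$). So the correction term $|A|\,\delta\,\tr[\rho_R^2]$ claimed in the lemma does not follow from pairing the $|A|\delta\,\id_{AA'}$ term of Lemma \ref{c1} with $F_{RR'}$ --- that pairing is precisely the illegitimate ``trace an operator inequality against a non-positive operator'' step you refused to take. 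Either the error term must be weakened to something of the form $|A|\,\delta\,(\tr\rho_{AR})^2$ (which then propagates into the $\delta$-dependent correction in Theorem \ref{Dec4}), or a genuinely sharper bound on $\tr[E\,\Lambda_{AA'}]$ is required; neither your proposal nor the paper supplies one.
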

\begin{proof}
The proof follows that of Lemma C.2 in
\cite{BertaReverse}, by replacing Lemma C.1 from
\cite{BertaReverse} with Lemma \ref{c1}. 
\end{proof}
 
\subsection{Proof of decoupling theorem (Theorem \ref{Dec4})} \label{proof}

We prove our decoupling Theorem (\ref{Dec4}) using $\delta$-approximate 2-designs. 
Some steps are analogous to the proof of the decoupling theorem in \cite{Datta}, 
but for completeness we will state all steps here. \\
For fixed $0 < \epsilon\leq 1, \bar{\rho}^{AR}\in B^\epsilon(\rho^{AR})$ and
$\bar\sigma_{A_1R}(U)=\tr_{A_2} [
(U\otimes\id_R)\bar\rho_{AR}(U\otimes\id_R)^\dagger ]$, 
we aim for an upper bound on the expectation value on the left hand side of
Equation (\ref{left}). First we use the triangle inequalitiy
\begin{eqnarray}
&& \normTr{\sigma_{A_1R}(U_i) - \tau^{A_1}\otimes\rho^R} \nonumber\\\nonumber
& \leq &\normTr{\sigma_{A_1R}(U_i) - \bar\sigma_{A_1R}(U_i)} +
\normTr{\bar\sigma_{A_1R}(U_i) - \tau^{A_1}\otimes\bar\rho^R}  \\\nonumber
&& + \normTr{\tau^{A_1}\otimes\bar\rho^R - \tau^{A_1}\otimes\rho^R} \\\nonumber
& \leq &
2\normTr{\sigma_{A_1R}(U_i) - \bar\sigma_{A_1R}(U_i)} +
\normTr{\bar\sigma_{A_1R}(U_i) - \tau^{A_1}\otimes\bar\rho^R} ,
\end{eqnarray}
where we also use that $\bar\sigma^R(U_i) = \bar\rho^R$ and monotonicity of the
trace distance under partial trace. \\
We continue by upper-bounding the first term. 
\begin{eqnarray*}
&&\sum_{(p_i, U_i)\in\DD} p_i \normTr{\sigma_{A_1R}(U_i) -
\bar\sigma_{A_1R}(U_i)} \\
&=& \sum_{(p_i, U_i)\in\DD} p_i \normTr{\tr_{A_2} [
(U\otimes\id_R)(\rho_{AR}-\bar\rho_{AR})(U\otimes\id_R)^\dagger ]} \\
&\leq & \sum_{(p_i, U_i)\in\DD} p_i \normTr{(U\otimes\id_R)(\rho_{AR}-\bar\rho_{AR})(U\otimes\id_R)^\dagger} \\
&=& \sum_{(p_i, U_i)\in\DD} p_i \normTr{\rho_{AR}-\bar\rho_{AR}} = \normTr{\rho_{AR}-\bar\rho_{AR}} \leq  2\epsilon,
\end{eqnarray*}
where the first inequality again follows from the monotonicity of the
trace distance under partial trace. The second equality follows from
the invariance of the trace distance under applying unitary operators. 
Now we make use of the inequality (Lemma 5.1.3 of \cite{RennerPhD})
\begin{equation}
\normTr{H} \leq \sqrt{\tr{\Omega}} \normTwo{\Omega^{-\frac{1}{4}} H
\Omega^{-\frac{1}{4}}},
\end{equation}
where H is an operator in $\HH(\HS)$ and $\Omega\in S_=(\HS)$, to upper bound the second term. Then
\begin{eqnarray}
&& \sum_{(p_i, U_i)\in\DD} p_i
\normTr{\bar\sigma_{A_1R}(U_i) - \tau^{A_1}\otimes\bar\rho^R}  \nonumber\\
&\leq& \sum_{(p_i, U_i)\in\DD} p_i \sqrt{\tr{\Omega}}
\sqrt{\normTwo{\Omega^{-\frac{1}{4}} ( \bar\sigma_{A_1R}(U_i) -
\tau^{A_1}\otimes\bar\rho^R) \Omega^{-\frac{1}{4}}}^2} \nonumber\\
&\leq& \sqrt{\tr{\Omega}}
\sqrt{\sum_{(p_i, U_i)\in\DD} p_i \normTwo{\Omega^{-\frac{1}{4}} (
\bar\sigma_{A_1R}(U_i) - \tau^{A_1}\otimes\bar\rho^R) \Omega^{-\frac{1}{4}}}^2}. \nonumber\\
\label{root}
\end{eqnarray}
The second inequalitiy follows from concavity of the squareroot function. 
We choose $\Omega = \id_{A_1}\otimes\omega_R,\, \omega_R\in S_=(\HS_R)$ and define
\begin{eqnarray*}
\tilde\sigma_{A_1R}(U_i) &=& \Omega^{-\frac{1}{4}} \bar\sigma_{A_1R}(U_i)
\Omega^{-\frac{1}{4}} \\
&=& \tr_{A_2} [
(U\otimes\id_R)\tilde\rho_{AR}(U\otimes\id_R)^\dagger ],
\end{eqnarray*} 
where $\tilde\rho_{AR} =
(\id_{A}\otimes\omega_R^{-\frac{1}{4}})\bar\rho_{AR}(\id_{A}\otimes\omega_R^{-\frac{1}{4}})$.
Note that $\tilde\rho_{R} = \tr_A \tilde\rho_{AR} = \omega_R^{-\frac{1}{4}}
\bar\rho_{R}\omega_R^{-\frac{1}{4}}$. \\
We concentrate on the term in the second squareroot in (\ref{root}). 
\begin{eqnarray}
&&\sum_{(p_i, U_i)\in\DD} p_i \normTwo{\Omega^{-\frac{1}{4}} (
\bar\sigma_{A_1R}(U_i) - \tau^{A_1}\otimes\bar\rho^R) \Omega^{-\frac{1}{4}}}^2
\nonumber\\
&=& \sum_{(p_i, U_i)\in\DD} p_i  \normTwo{\tilde\sigma_{A_1R}(U_i) -
\tau^{A_1}\otimes\tilde\rho^R}^2 \nonumber\\
&=& \sum_{(p_i, U_i)\in\DD} p_i  \tr[(\tilde\sigma_{A_1R}(U_i) -
\tau^{A_1}\otimes\tilde\rho^R)^2] \nonumber\\
&=& \sum_{(p_i, U_i)\in\DD} p_i  (\tr[\tilde\sigma_{A_1R}(U_i)^2] \label{bino}\\\nonumber
&& -2\tr[\tilde\sigma_{A_1R}(U_i)(\tau^{A_1}\otimes\tilde\rho^R)]) +
\tr[(\tau^{A_1}\otimes\tilde\rho^R)^2]) 
\end{eqnarray}
For now, we focus on the second term in (\ref{bino}). 
\begin{eqnarray}
&& -\sum_{(p_i, U_i)\in\DD}
p_i 2\tr[\tilde\sigma_{A_1R}(U_i)(\tau^{A_1}\otimes\tilde\rho^R))]
\nonumber\\
&=& - 2\tr[\sum_{(p_i, U_i)\in\DD}
p_i \tilde\sigma_{A_1R}(U_i)(\tau^{A_1}\otimes\tilde\rho^R))] \nonumber\\
&\leq& -
2\tr[(((\frac{1}{|A_1|}-|A|\delta)\id_{A_1})\otimes\tilde\rho^R)(\tau^{A_1}\otimes\tilde\rho^R))]  \nonumber\\
&=& -2\tr[(\tau^{A_1}\otimes\tilde\rho^R)^2] + 2|A||A_1|\delta\tr[(\tau^{A_1}\otimes\tilde\rho^R)^2],\nonumber\\
 \label{sec}
\end{eqnarray}
where the inequality follows from the properties of $\delta$-approximate unitary
2-designs. 
Plugging (\ref{sec}) into (\ref{bino}) we get
\begin{eqnarray}
&& \sum_{(p_i, U_i)\in\DD} p_i (\tr[\tilde\sigma_{A_1R}(U_i)^2]  \nonumber\\
&&-2\tr[\tilde\sigma_{A_1R}(U_i)(\tau^{A_1}\otimes\tilde\rho^R)] +
\tr[(\tau^{A_1}\otimes\tilde\rho^R)^2]) \nonumber\\
&\leq& \sum_{(p_i, U_i)\in\DD} p_i (\tr[\tilde\sigma_{A_1R}(U_i)^2]  \nonumber\\
&& -(1-2|A_1||A|\delta)\tr[(\tau^{A_1}\otimes\tilde\rho^R)^2])
\nonumber\\
&=& \sum_{(p_i, U_i)\in\DD} p_i (\tr[\tilde\sigma_{A_1R}(U_i)^2]  \nonumber\\
&& -(\frac{1}{|A_1|}-2|A|\delta)\tr[(\tilde\rho^R)^2]) \nonumber\\
&\leq& \frac{1}{|A_2|}\tr[(\tilde\rho^{AR})^2] + 
3|A|\delta\tr[(\tilde\rho^R)^2]
\nonumber\\
&=&\frac{1}{|A_2|} 2^{D_2(\bar\rho^{AR}||\id_A\otimes\omega_R)}
+ 3|A|\delta2^{D_2(\bar\rho^{R}||\omega_R)}
\nonumber\\
&\leq& (\frac{1}{|A_2|} + 3|A|\delta)
2^{D_2(\bar\rho^{AR}||\id_A\otimes\omega_R)} 
\nonumber\\
&=& (\frac{1}{|A_2|} + 3|A|\delta)
2^{-H_2(A|R)_{\bar\rho|\omega}} 
\nonumber\\
&\leq& (\frac{1}{|A_2|} + 3|A|\delta)
2^{-\Hmin(A|R)_{\bar\rho|\omega}} \label{Hmin}
\end{eqnarray}
where the second inequality follows from Lemma (\ref{c2}), the second equality
from the definition of the collision-entropy, the third inequality from the DPI in (\ref{DPI}), the fourth inequality from Lemma B.6 in
\cite{BertaReverse}.
\\
Now we choose $\bar\rho^{AR}\in S_\leq(\HS_{AR})$ and $\omega_R\in S_=(\HS_R)$
such that $\Hmin(A|R)_{\bar\rho|\omega} = \Hmins(A|R)_{\rho}$. \\
Using this and (\ref{Hmin}), we get
\begin{eqnarray}
&&\sum_{(p_i, U_i)\in\DD} p_i \normTr{\sigma_{A_1R}(U_i) -
\tau^{A_1}\otimes\rho^R}  \nonumber\\   \label{clo}
&\leq& 4\epsilon + \sqrt{|A_1|}\sqrt{(\frac{1}{|A_2|} + 3|A|\delta)
2^{-\Hmins(A|R)}} .
\end{eqnarray}
To finish the proof we need to show when the right hand side of (\ref{clo}) can
be upper bounded by $\epsilon$. This holds when
\begin{eqnarray*}
\sqrt{|A_1|}\sqrt{(\frac{1}{|A_2|} + 3|A|\delta) 2^{-\Hmins(A|R)}} \label{close} &\leq \epsilon.
\end{eqnarray*}
By rearanging we get
\begin{eqnarray*}
\log|A_1| - \frac{1}{2}(\Hmins(A|R)+\log|A|) +& \\
\log(1 + 3\frac{|A|^2}{|A_1|}\delta)) &\leq \log\epsilon
\end{eqnarray*}
Thus we can conclude that 
\begin{equation}
\sum_{(p_i, U_i)\in\DD} p_i \normTr{\sigma_{A_1R}(U_i) -
\tau^{A_1}\otimes\rho^R} \leq 5\epsilon 
\end{equation}
holds, if 
\begin{equation}
\log|A_1| - \frac{1}{2}(\Hmins(A|R)+\log|A|) \leq \log\frac{\epsilon}{\sqrt{1 +
3\frac{|A|^2}{|A_1|}\delta}}
\end{equation}
and therefore Theorem (\ref{Dec4}) is proven. 

\section{Application to information theoretic protocols}\label{arch}

We now want to apply our results to the achievability of information theoretic
protocols. Therefore we take a closer look at the
one-shot coherent state merging protocol. We start with a single copy of a tripartite pure state
$\ket{ABR}{\Psi}$.
The goal is to transfer the A part of the state to the party which also holds
the B part and at the same time generate entaglement between these parties,
while leaving the reference system unchanged. We call this an $\epsilon$-error
protocol if this goal is achieved with an error not bigger then $\epsilon$. \\
Using our results from the last section, we can now follow the proof of
 the achievability theorem for the coherent state merging protocol
in \cite{Datta}(Theorem 8) now using the decoupling theorem given by Theorem
\ref{Dec4}. We get the following achievability theorem. 
\begin{thm}[Achievability of coherent state merging using
$\delta$-approximate 2-designs]\label{achiev2} 
For any tripartite pure state
$\ket{ABR}{\Psi}$ there exists an $\epsilon$-error one-shot coherent state merging protocol with an entanglement gain
$e^{(1)}_\epsilon$ and a quantum communication cost $q^{(1)}_\epsilon$ bounded
by
\begin{equation*}
e_\epsilon^{(1)} \geq \frac{1}{2}[\Hmins[\lambda](A|R)_\Psi +H^\lambda_0(A)_\Psi]
+ \log(\lambda' \frac{1}{\sqrt{1+3\frac{|A|^2}{|A_1|}\delta}})
\end{equation*}
\begin{equation*}
q_\epsilon^{(1)} \leq \frac{1}{2}[-\Hmins[\lambda](A|R)_\Psi +H^\lambda_0(A)_\Psi]
- \log(\lambda' \frac{1}{\sqrt{1+3\frac{|A|^2}{|A_1|}\delta}})
\end{equation*}
for $0 < \epsilon\leq 1$ and $\lambda >0$ such that $\epsilon = 2\sqrt{5\lambda'} + 2\sqrt{\lambda}$
and $\lambda' = \lambda + \sqrt{4\sqrt{\lambda}-4\lambda}$.
\end{thm}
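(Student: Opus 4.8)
The approach is to run the decoupling-based coherent state merging protocol exactly as in \cite{Datta}(Theorem~8), with Theorem~\ref{Dec4} substituted for the Haar decoupling bound, and then to read off the rates from the dimensions involved; I would proceed as follows.

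First, \emph{reduce to the typical support}: invoking the definition of $H_0^\lambda$, pick a sub-normalized $\bar\Psi_{ABR}\in B^\lambda(\Psi_{ABR})$ whose $A$-marginal is supported on a subspace of dimension $2^{H^\lambda_0(A)_\Psi}$, and merge $\bar\Psi$ in place of $\Psi$; then $\log|A|$ in the hypothesis of Theorem~\ref{Dec4} is effectively replaced by $H^\lambda_0(A)_\Psi$. This is the step that pins down the precise relation $\lambda'=\lambda+\sqrt{4\sqrt\lambda-4\lambda}$: it is exactly the slack in the purified distance needed to pass from $\Psi$ to $\bar\Psi$, to re-smooth so that $\Hmins[\lambda'](A|R)_{\bar\Psi}\ge\Hmins[\lambda](A|R)_\Psi$, and still to leave $2\sqrt\lambda$ for the final error budget.

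Second, \emph{decouple}: partition $A=A_1A_2$ and apply Theorem~\ref{Dec4} to $\bar\Psi_{AR}$ with smoothing parameter $\lambda'$. Taking $\log|A_1|$ as large as the hypothesis allows, namely
\[
\log|A_1|\le\tfrac12\bigl[\Hmins[\lambda](A|R)_\Psi+H^\lambda_0(A)_\Psi\bigr]+\log\!\Bigl(\lambda'\tfrac{1}{\sqrt{1+3\frac{|A|^2}{|A_1|}\delta}}\Bigr),
\]
Theorem~\ref{Dec4} yields a \emph{fixed} unitary $U_i$ from the $\delta$-approximate $2$-design with $\normTr{\tr_{A_2}(U_i\bar\Psi_{AR}U_i^\dagger)-\tau_{A_1}\otimes\Psi_R}\le 5\lambda'$, hence purified distance at most $\sqrt{5\lambda'}$. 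In the protocol Alice applies this $U_i$ (implementable by a polynomial-size circuit, cf.\ Section~\ref{Decoupling}), splits $A$ and sends $A_2$ to Bob, so the quantum communication cost is $q^{(1)}_\epsilon=\log|A_2|$ and the entanglement gain will be $e^{(1)}_\epsilon=\log|A_1|$.

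Third, \emph{apply decoupling duality}: the global post-$U_i$ state is pure with $A_1R$-marginal within purified distance $\sqrt{5\lambda'}$ of $\tau_{A_1}\otimes\Psi_R$, whose canonical purification is the maximally entangled state $\Phi_{A_1A_{1'}}$ tensored with a purification of $\Psi_R$; by Uhlmann's theorem there is an isometry on $A_2B$, which Bob applies after receiving $A_2$, bringing the post-$U_i$ state to within purified distance $\sqrt{5\lambda'}$ of this target. Monotonicity of the purified distance under isometries, together with the triangle inequality and the Step-1 contribution, then bounds the total protocol error by $2\sqrt{5\lambda'}+2\sqrt\lambda=\epsilon$; at the end Alice and Bob share $\log|A_1|$ ebits on $A_1A_{1'}$ and Bob holds the purifying register of $\Psi_R$, i.e.\ $A$ has been merged onto his side. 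Reading off the rates, $e^{(1)}_\epsilon=\log|A_1|$ saturating the displayed inequality and $q^{(1)}_\epsilon=\log|A_2|=H^\lambda_0(A)_\Psi-\log|A_1|$ give exactly the two claimed bounds, and letting $\delta\to0$ kills the correction term, recovering Theorem~8 of \cite{Datta} as Definition~\ref{kdes} demands. I expect the only real obstacle to be the error accounting in the first and third steps: one has to chain the $\lambda$-smoothing ball, the $5\lambda'$ decoupling error and the Uhlmann step through the purified and trace distances, using $\tfrac12\normTr{\cdot}\le P(\cdot)\le\sqrt{\normTr{\cdot}}$ at the right places, so that the constants collapse precisely to $\epsilon=2\sqrt{5\lambda'}+2\sqrt\lambda$ with $\lambda'=\lambda+\sqrt{4\sqrt\lambda-4\lambda}$; everything that is genuinely about decoupling is delegated to Theorem~\ref{Dec4}.
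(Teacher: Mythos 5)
Your proposal is correct and takes essentially the same route as the paper: the paper's entire proof consists of the remark that one follows the proof of Theorem~8 in \cite{Datta} with Theorem~\ref{Dec4} substituted for the Haar-measure decoupling theorem, which is precisely the reduce-to-typical-support, decouple, then Uhlmann-decode argument you reconstruct, with the $\lambda,\lambda'$ error accounting inherited unchanged from \cite{Datta}. Your write-up is in fact more explicit than what the paper provides.
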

Note, that  in the case of $\delta$ going to zero, we recover the theorem in \cite{Datta}. 
Also we can reproduce the asymtotic IID rates
\begin{eqnarray}
q^\infty =& \lim_{\epsilon\rightarrow 0} \lim_{n\rightarrow \infty}
q^{(1)}_{\epsilon,n} =& \frac{1}{2} I(A:R)_\Psi, \\
e^\infty =& \lim_{\epsilon\rightarrow 0} \lim_{n\rightarrow \infty}
e^{(1)}_{\epsilon,n} =& \frac{1}{2} I(A:B)_\Psi,
\end{eqnarray}
for any $\delta$.
 
\section{Conclusion}\label{Conclusions}
In this work we explored the effect of limiting ourselves to \emph{efficiently
implementable} tools for one-shot quantum Shannon theory on the rates for
information theoretic protocols. \\
We showed achievable rates and that they are consistent with the previously
known results. 
To get this consistency we derived a new special case of the decoupling theorem.

\section*{Acknowledgment}
The authors would like to thank R. F. Werner and Y. Nakata for fruitful discussions. \\
This work was supported by the EU grants SIQS and QFTCMPS and by the cluster of excellence EXC 201 Quantum Engineering and Space-Time Research.

\bibliographystyle{IEEEtran}
\bibliography{Thesis}

\begin{thebibliography}{10}
\providecommand{\url}[1]{#1}
\csname url@samestyle\endcsname
\providecommand{\newblock}{\relax}
\providecommand{\bibinfo}[2]{#2}
\providecommand{\BIBentrySTDinterwordspacing}{\spaceskip=0pt\relax}
\providecommand{\BIBentryALTinterwordstretchfactor}{4}
\providecommand{\BIBentryALTinterwordspacing}{\spaceskip=\fontdimen2\font plus
\BIBentryALTinterwordstretchfactor\fontdimen3\font minus
  \fontdimen4\font\relax}
\providecommand{\BIBforeignlanguage}[2]{{%
\expandafter\ifx\csname l@#1\endcsname\relax
\typeout{** WARNING: IEEEtran.bst: No hyphenation pattern has been}%
\typeout{** loaded for the language `#1'. Using the pattern for}%
\typeout{** the default language instead.}%
\else
\language=\csname l@#1\endcsname
\fi
#2}}
\providecommand{\BIBdecl}{\relax}
\BIBdecl

\bibitem{WinterFam}
I.~Devetak, A.~W. Harrow, and A.~Winter, ``A family of quantum protocols,''
  \emph{Phys. Rev. Lett.}, vol.~93, no.~23, p. 230504, 2004.

\bibitem{apex}
A.~Abeyesinghe, I.~Devetak, P.~Hayden, and A.~Winter, ``The mother of all
  protocols: restructuring quantum information's family tree,'' \emph{Proc.
  Phys. Soc., London, Sect. A}, vol. 465, pp. 2537--2563, 2009.

\bibitem{hircheBSC}
C.~Hirche, ``One shot decoupling theorems in quantum information,'' Bachelor
  thesis, Leibniz Universit\"{a}t Hannover, 2013.

\bibitem{Datta}
N.~Datta and M.~Hsieh, ``The apex of the family tree of protocols: optimal
  rates and resource inequalities,'' \emph{New J. Phys.}, vol.~13, no.~9, p.
  093042, 2011.

\bibitem{DupuisPhD}
F.~Dupuis, ``The decoupling approach to quantum information theory,'' Ph.D.
  thesis, Universit{\'{e}} de Montr{\'{e}}al, 2010, arXiv:1004.1641 [quant-ph].

\bibitem{Oleg}
O.~Szehr, ``Decoupling theorems,'' Master's thesis, ETH Z\"{u}rich, 2012,
  arXiv:1207.3927 [quant-ph].

\bibitem{OlegPaper}
O.~Szehr, F.~Dupuis, M.~Tomamichel, and R.~Renner, ``Decoupling with unitary
  approximate two-designs,'' \emph{New J. Phys. 15 053022}, 2013.

\bibitem{TomamichelPhD}
M.~Tomamichel, ``A framework for non-asymptotic quantum information theory,''
  Ph.D. thesis, ETH Z\"{u}rich, 2012, arXiv:1203.2142 [quant-ph].

\bibitem{newRenyi}
M.~M\"uller-Lennert, F.~Dupuis, O.~Szehr, S.~Fehr, and M.~Tomamichel, ``On
  quantum {R}\'enyi entropies: a new generalization and some properties,''
  \emph{J. Math. Phys. 54, 122203}, 2013.

\bibitem{WildeWinter}
M.~M. Wilde, A.~Winter, and D.~Yang, ``{Strong converse for the classical
  capacity of entanglement-breaking and Hadamard channels},''
  \emph{arXiv:1306.1586 [quant-ph]}, 2013.

\bibitem{Beigi}
S.~Beigi, ``Sandwiched {R}\'{e}nyi divergence satisfies data processing
  inequality,'' \emph{J. Math. Phys. 54, 122202}, 2013.

\bibitem{Lieb}
R.~L. Frank and E.~H. Lieb, ``Monotonicity of a relative {R}\'{e}nyi entropy,''
  \emph{J. Math. Phys. 54, 122201}, 2013.

\bibitem{RennerPhD}
R.~Renner, ``Security of quantum key distribution,'' \emph{Int. J. Quantum
  Inf.}, vol.~6, no.~01, pp. 1--127, Ph.D. thesis, ETH Z\"{u}rich , 2008.

\bibitem{OSDec}
F.~Dupuis, M.~Berta, J.~Wullschleger, and R.~Renner, ``One-shot decoupling,''
  \emph{arXiv:1012.6044 [quant-ph]}, 2010.

\bibitem{2dc}
A.~Harrow and R.~Low, ``Random quantum circuits are approximate 2-designs,''
  \emph{Commun. Math. Phys.}, vol. 291, pp. 257--302, 2009.

\bibitem{yoshi1}
Y.~Nakata and M.~Murao, ``Diagonal-unitary 2-designs and their implementations
  by quantum circuits,'' \emph{arXiv:1206.4451 [quant-ph]}, 2012.

\bibitem{yoshi2}
Y.~Nakata, M.~Koashi, and M.~Murao, ``Generating a t-design of random states by
  using a diagonal quantum circuit and classical probabilistic procedure,''
  \emph{arXiv:1311.1128 [quant-ph]}, 2013.

\bibitem{BertaReverse}
M.~Berta, M.~Christandl, and R.~Renner, ``The quantum reverse shannon theorem
  based on one-shot information theory,'' \emph{Commun. Math. Phys. 306, 579},
  2011.

\bibitem{Low}
R.~A. Low, ``{Pseudo-randomness and Learning in Quantum Computation},'' Ph.D.
  thesis, University of Bristol, 2010, arXiv:1006.5227 [quant-ph].

\end{thebibliography}

\end{document}